\newcommand{\M}{\mathcal M}
\newcommand{\bernoulli}{\mathrm{Bernoulli}}
\newcommand{\binomial}{\mathrm{Binomial}}
\newcommand{\Var}{\mathrm{Var}}
\newtheorem{theorem}{Theorem}
\title{Easier Estimation of Extremes under Randomized Response}
\author{Jonathan Hehir}
\date{\today}
\begin{document}

\maketitle

\begin{abstract}
    \noindent In this brief note, we consider estimation of the bitwise combination $x_1 \lor \dots \lor x_n = \max_i x_i$ observing a set of noisy bits $\tilde x_i \in \{0, 1\}$ that represent the true, unobserved bits $x_i \in \{0, 1\}$ under randomized response.
    We demonstrate that various existing estimators for the extreme bit, including those based on computationally costly estimates of the sum of bits, can be reduced to a simple closed form computed in linear time (in $n$) and constant space, including in an online fashion as new $\tilde x_i$ are observed.
    In particular, we derive such an estimator and provide its variance using only elementary techniques.
\end{abstract}

\section{Introduction}

Consider the simple task of calculating the size of the union of sets from a (small) finite universe, given binary representations of the sets. Suppose we have $n$ sets $S_j \subseteq [m] = \{1, \dots, m\}$ for $j \in [n]$, and we encode variables $x_{ij} = 1_{ i \in S_j }$ for $i \in [m], j \in [n]$. The cardinality of the set union is given by
\[
|S_1 \cup \dots \cup S_n| = \sum_{i=1}^m 1_{i \in S_1 \cup \dots \cup S_n} = \sum_{i=1}^m x_{i1} \lor \dots \lor x_{in} ,
\]
where $x_{i1} \lor \dots \lor x_{in} = \max_j x_{ij}$.

Now suppose that instead of observing the true sets $ \{S_j\}$ or their bit representations $\{x_{ij}\}$, we observe the noisy bit representations $\{\M_{q_{ij}}(x_{ij})\}$, where $\M_{q} : \{0,1\} \to \{0, 1\}$ denotes the \textit{randomized response} mechanism
\[
\M_{q}(x) \overset{ind}{\sim} \begin{cases}
    \bernoulli(1-q), & x = 1 \\
    \bernoulli(q), & x = 0
\end{cases} .
\]
Observing only the noisy bits, we can no longer determine $|S_1 \cup \dots \cup S_n|$ exactly; however, an unbiased estimator for $x_{i1} \lor \dots \lor x_{in}$ leads naturally to an unbiased estimator for $|S_1 \cup \dots \cup S_n|$. Moreover, the variance of this cardinality estimator is equal to the sum of the variances of the $m$ bitwise estimates, and the cardinality estimator is a minimum-variance unbiased estimator (MVUE) if the bitwise estimator is also an MVUE.

This basic problem lies at the heart of several more sophisticated problems. For example, privately estimating the size of set unions from large universes may be achieved efficiently with binary data sketches such as Bloom filters and Flajolet--Martin sketches perturbed by randomized response \citep[e.g.,][]{stanojevic2017distributed,alaggan2018privacy,kreuter2020privacy,hehir2023sketch}, and the resulting estimators generally rely on precisely the sort of estimation outlined above.

In the remainder of this note, we broadly consider estimation of $x_1 \lor \dots \lor x_n = \max_i x_i$ observing only the noisy bits $\{ \M_{q_i}(x_i) \}$. While we focus on the maximum case (i.e., bitwise \textit{or}) in this note, our results and discussion generalize easily to the other extreme, $x_1 \land \dots \land x_n = \min x_i$ (i.e., bitwise \textit{and}). We consider three different estimators arising from existing literature and show that despite their differences, all of these methods may be used to produce identical estimates for $x_1 \lor \dots \lor x_n$. We rederive an equivalent estimator (Theorems~\ref{thm:estimator}, \ref{thm:equivalence}) using elementary techniques to obtain an estimator that (a) can be calculated in $O(n)$ time and $O(1)$ space, (b) can be updated in an online fashion as new noisy bits are observed, (c) flexibly accommodates different randomized response parameters for each bit, and (d) has tractable variance whose form we provide (Theorem~\ref{thm:variance}).

Our estimator may be directly plugged into previous works to obtain more flexible, efficient, streamable estimates. For example, in differentially private cardinality estimation, substituting our estimator in \citet{stanojevic2017distributed} allows the solution to scale beyond two sketches, in \citet{kreuter2020privacy} allows for sketches with different privacy parameters to be combined, and in \citet{hehir2023sketch} leads to a variance reduction when using multiple sketches in exchange for at most a constant space tradeoff.

\section{Existing Estimators for $x_1 \lor \dots \lor x_n$}

\subsection{Method 1: A Sum Estimator}

The problem of estimating $x_1 \lor \dots \lor x_n$ given $\{ \M_{q_i}(x_i) \}$ is perhaps most frequently studied in the simplified case when $q_1 = \dots = q_n = q$ for some known $q \in(0, 1/2)$, in which case we may slightly abuse notation to write:
\begin{equation}
\label{eq:convolution}
\sum_{i = 1}^n \M_q(x_i) \sim \textstyle
    \binomial\left( \sum_i x_i, 1-q \right) + \binomial\left( n - \sum_i x_i, q \right) .
\end{equation}
Thus, in the equal-$q$ case, the sum of the noisy bits is a convolution of two binomial-distributed random variables whose count parameters (in particular $\sum_i x_i$) might become the target of estimation.

A popular method for estimating $\sum_i x_i$ in this case is to derive the probability mass function for the convolution in Eq. (\ref{eq:convolution}) and encode the relevant values into an $(n+1) \times (n+1)$ transition probability matrix $P$, where each entry denotes the probability that a given sum of $x_i$ bits would translate to a sum of noisy $\M_q(x_i)$ bits, i.e., the matrix that satisfies
\[
P e_{1 + \sum_i x_i} = E \left[ e_{1 + \sum_i \M_q(x_i)} \right] ,
\]
where $e_j$ denotes the $j$-th elementary basis vector, i.e., the vector whose $j$-th entry is 1 and remaining entries 0. From here, a method-of-moments estimator yields:
\[
e_{1 + \sum_i x_i} \approx P^{-1} e_{1 + \sum \M_q(x_i)} .
\]
Since $x_1 \lor \dots \lor x_n = 1_{\sum_i x_i > 0}$, this estimator for $\sum_i x_i$ is easily adapted to provide an estimate for $x_1 \lor \dots \lor x_n$:
\[
\hat Y_{convolution} = 1 - P^{-1}_{1,1 + \sum \M_q(x_i)} .
\]
This is the estimator used in \citet{kreuter2020privacy} and discussed in \citet{alaggan2017non}.%
\footnote{\citet{alaggan2017non}, which is the basis for \citet{alaggan2018privacy}, derives an estimator based on $P^{-1}$ that constrains estimates to a suitable space. These works propose solutions for more general counting problems involving noisy bits, of which the type we focus on here is an important special case.} It is easy to see that $\hat Y_{convolution}$ is an unbiased estimator of $x_1 \lor \dots \lor x_n$, as:
\begin{align*}
P e_{1 + \sum_i x_i} &= E \left[ e_{1 + \sum_i \M_q(x_i)} \right] \\
\implies e_{1 + \sum_i x_i} &= E \left[ P^{-1} e_{1 + \sum_i \M_q(x_i)} \right] \\
\implies 1_{\sum_i x_i = 0} &= E \left[ P^{-1}_{1, 1 + \sum_i \M_q(x_i)} \right] \\
\implies 1 - x_1 \lor \dots \lor x_n &= E \left[ 1 - \hat Y_{convolution} \right] .
\end{align*}

We pause to note several areas for improvement in this estimator. First, it is not particularly efficient: obtaining an estimate over $n$ bits requires deriving the convolution described above, then constructing and inverting an $(n+1) \times (n+1)$ matrix---from which only a single entry is used. Second, since a closed form for $P^{-1}$ is non-obvious, the variance of this estimator does not appear to be described in closed form. Finally, this estimator requires that all the $q_i$ are equal and that all bits are considered simultaneously in an offline fashion.

\subsection{Method 2: A Bigger Sum Estimator}

An estimator for $\sum x_i$ is given in \citet{vinterbo2018simple} that generalizes to the case when $q_1, \dots, q_n$ are not all equal. This estimator relies instead on a $2^n \times 2^n$ transition probability matrix that encodes the probability of any sequence of true bits $x_1, \dots, x_n$ being mapped to a given sequence of noisy bits $\M_{q_1}(x_1), \dots, \M_{q_n}(x_n)$. As a result, the presented estimator is even more computationally expensive, requiring exponential time and space. Although not the focus of the original work, this estimator can be adapted to generalize $\hat Y_{convolution}$ to the unequal-$q$ case.

The work of \citet{vinterbo2018simple} essentially generalizes an estimation procedure designed in \citet{stanojevic2017distributed} for the estimation of the cardinality of two sets' union or intersection under local differential privacy.%
\footnote{This same procedure is used in \citet{gao2020dplcf}.}
\citet{stanojevic2017distributed} stop short of generalizing their results to $n \geq 2$ sets, noting the exponential complexity of the solution.

As in the previous method, from the entire inverted transition probability matrix, only a single entry is relevant to the estimation of $x_1 \lor \dots \lor x_n$. Rather than explicitly derive the estimator arising from this method, we will move on to our next method, which obtains that single matrix entry more efficiently.

\subsection{Method 3: Bitwise Operations under Randomized Response}

\citet{hehir2023sketch} demonstrate efficient methods to perform bitwise operations under randomized response, including $\lor$ via a randomized merging algorithm. The aim of their work is not to directly estimate $x_1 \lor \dots \lor x_n$, but rather to produce random bits whose distribution is a function of $x_1 \lor \dots \lor x_n$. In particular, their algorithm $g$ satisfies:
\[
g_{q_1, \dots, q_n}(\M_{q_1}(x_1), \dots, \M_{q_n}(x_n)) \overset D= \M_{q^*}(x_1 \lor \dots \lor x_n),
\]
for some $q^* \in [\max q_i, 1/2)$ given as a function of $q_1, \dots, q_n$. This method is easily adapted to obtain unbiased estimates for $x_1 \lor \dots \lor x_n$ by de-biasing the effects of the randomized response mechanism $\M_{q^*}$, i.e.:
\[
\hat Y_{randmerge} = \frac{g_{q_1, \dots, q_n}(\M_{q_1}(x_1), \dots, \M_{q_n}(x_n)) - q^*}{1 - 2 q^*} .
\]

It's important to note that $\hat Y_{randmerge}$ has two sources of randomness: the randomness from applying $\M_{q_i}$ to the original bits $x_i$, plus additional randomness in $g$ used to perform the aggregation of bits. In the original work, this randomness was imposed to force $g$ to take values in $\{0, 1\}$, but $\hat Y_{randmerge}$ neither requires nor satisfies this property. Accordingly, we adapt the estimator one step further by removing the randomness of the merge operation $g$:
\[
\hat Y_{RBmerge} = \frac{E_g[ g_{q_1, \dots, q_n}(\M_{q_1}(x_1), \dots, \M_{q_n}(x_n))] - q^*}{1 - 2 q^*} .
\]
In the above, the expectation $E_g[ \cdot ]$ is taken with respect to the randomness of the merge operation $g$ only. The fact that $\hat Y_{RBmerge}$ remains unbiased follows from the law of total expectation. (Equivalently, $\hat Y_{RBmerge}$ may be seen as a Rao--Blackwellization of $\hat Y_{randmerge}$.)
\[
E [ \hat Y_{RBmerge} ] = E [ E_g [ \hat Y_{randmerge} ] ] = E [ \hat Y_{randmerge} ] = x_1 \lor \dots \lor x_n .
\]

In fact, it follows from the derivation of $g$ in \citet{hehir2023sketch} that $\hat Y_{RBmerge}$ once again comes from the top row of the inverse of a large transition probability matrix. In particular, the $n$-way merge of \citet[Theorem~A.2]{hehir2023sketch} involves the same $2^n \times 2^n$ transition probability matrix as \citet{vinterbo2018simple}. However, as the authors demonstrate, this merge operation can be performed inductively by combining pairs of bits $n-1$ times, reducing the computational cost from exponential to linear in $n$ and allowing for online/streaming implementations.

While this estimator achieves the flexiblity and computational requirements that we desire, it remains somewhat complicated and lacks an explicit variance. We tackle these issues next.

\section{An Elementary Estimator}

We will now derive an estimator for $x_1 \lor \dots \lor x_n$ using elementary techniques. The basic idea is to build an estimator inductively. When $n=1$, we have only one bit, which (trivially) is the extreme. In this case, the estimator simply de-biases the single noisy bit. On the other hand, when $n > 1$, additional de-biased bits may be combined through a simple multiplicative step. Starting with a single bit, observe that:
\begin{align*}
E[ \M_{q_i}(x_i) ] &= q_i (1 - x_i) + (1 - q_i) x_i \\
    &= q_i + x_i (1 - 2 q_i).
\end{align*}
As a result:
\[
E \left[ \frac{\M_{q_i}(x_i) - q_i}{1 - 2 q_i} \right] = x_i.
\]
So for $n=1$, we have an estimator that is unbiased for $x_1 = \min_i x_i = \max_i x_i$:
\[
\hat Y_{elementary}^{(1)} = \frac{\M_{q_1}(x_1) - q_1}{1 - 2 q_1}.
\]
Next, observe that the extreme bits have convenient multiplicative forms:
\begin{align*}
\min_i x_i &= \prod_{i=1}^n x_i , \\
\max_i x_i &= 1 - \prod_{i=1}^n (1 - x_i) .
\end{align*}
Since the randomness of the noisy bits is independent, we have, for example:
\[
E \left[ \prod_{i=1}^n \frac{\M_{q_i}(x_i) - q_i}{1 - 2 q_i} \right] = \prod_{i=1}^n E \left[ \frac{\M_{q_i}(x_i) - q_i}{1 - 2 q_i} \right] = \prod_{i=1}^n x_i = \min_i x_i .
\]
This is neat, but our intended focus was on estimating the maximum bit. Writing
\[
1 - \frac{\M_{q_i}(x_i) - q_i}{1 - 2 q_i} = \frac{1 - q_i - \M_{q_i}(x_i)}{1 - 2 q_i},
\]
we obtain:
\[
E \left[ \prod_{i=1}^n \frac{1 - q_i - \M_{q_i}(x_i)}{1 - 2 q_i} \right] = \prod_{i=1}^n E \left[ 1 - \frac{\M_{q_i}(x_i) - q_i}{1 - 2 q_i} \right] = \prod_{i=1}^n (1 - x_i) = 1 - \max_i x_i .
\]
This leads to the estimator we desire.
\begin{theorem}
\label{thm:estimator}
Suppose $x_1, \dots x_n \in \{0, 1\}$, $q_1, \dots, q_n \in [0, 1/2)$, and let
\[
\hat Y_{elementary} = 1 - \prod_{i=1}^n \frac{1 - q_i - \M_{q_i}(x_i)}{1 - 2 q_i}.
\]
Then $\hat Y_{elementary}$ is unbiased for $x_1 \lor \dots \lor x_n$, i.e.:
\[
E [\hat Y_{elementary} ] = \max_i x_i = x_1 \lor \dots \lor x_n .
\]
\end{theorem}
\begin{proof}
    See above.
\end{proof}

\subsection{Equivalence and Variance of the Estimators}

A general equivalence of $\hat Y_{elementary}$ with the estimators arising from \citet{vinterbo2018simple} and \citet{hehir2023sketch} can be seen through a cumbersome process of inspection. In short, letting $\otimes$ denote the Kronecker product, those estimators may be found by constructing the matrix
\[
K^{-1} = \left( \bigotimes_{i=1}^n \begin{bmatrix} 1-q_i & q_i \\ q_i & 1-q_i \end{bmatrix} \right)^{-1} = \prod_{i=1}^n(1-2q_i)^{-1} \bigotimes_{i=1}^n \begin{bmatrix} 1-q_i & -q_i \\ -q_i & 1-q_i \end{bmatrix} ,
\]
then choosing a value from its top row according to the specific observed sequence $\M_{q_1}(x_1), \dots, \M_{q_n}(x_n)$ and subtracting that value from $1$ \citep[see, e.g.,][Theorem~A.2]{hehir2023sketch}. The resulting estimator is precisely $\hat Y_{elementary}$.

Having stated the above without formal proof, we provide a more formal result for the case when $q_1 = \dots = q_n$. (Recall that $\hat Y_{convolution}$ is only defined in this case.)

\begin{theorem}
    \label{thm:equivalence}
    Suppose $x_1, \dots x_n \in \{0, 1\}$, $q_1 = \dots = q_n = q \in (0, 1/2)$, and consider $\hat Y_{convolution}$, $\hat Y_{RBmerge}$, and $\hat Y_{elementary}$ as previously described. Then $\hat Y_{convolution} = \hat Y_{RBmerge} = \hat Y_{elementary}$.
\end{theorem}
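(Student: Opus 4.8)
The plan is to reduce all three estimators to functions of the single statistic $k = \sum_{i=1}^n \M_q(x_i)$ (the number of observed ones) and then invoke a uniqueness argument: in the equal-$q$ setting there is exactly one function of $k$ that is unbiased for $\max_i x_i$ across all configurations. First I would record that $\hat Y_{convolution}$ is, by its very definition, the function $k \mapsto 1 - (P^{-1})_{1,1+k}$, and that $\hat Y_{elementary}$ also collapses to a function of $k$: when every $q_i = q$, each factor $\frac{1-q-\M_q(x_i)}{1-2q}$ equals $\frac{1-q}{1-2q}$ or $\frac{-q}{1-2q}$ according to whether the noisy bit is $0$ or $1$, so the product depends only on how many ones were observed.

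Next I would set up the uniqueness lemma. For any estimator $f$ that is a function of $k$ alone, the per-configuration unbiasedness requirement is equivalent to the $n+1$ linear equations $\sum_{k} f(k)\,\Pr[\sum_i \M_q(x_i) = k \mid \sum_i x_i = s] = 1_{s>0}$, one for each $s \in \{0,\dots,n\}$; the distribution on the left is exactly the binomial convolution of Eq.~(\ref{eq:convolution}), which depends on the configuration only through $s$, so these are genuinely $n+1$ distinct equations. In matrix form this reads $P^\top f = g$ with $g_s = 1_{s>0}$ and $P$ the transition matrix of Method~1. Since $q \in (0,1/2)$, $P$ is invertible, so $f$ is determined uniquely. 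Both $\hat Y_{convolution}$ (unbiased by the Method~1 derivation) and $\hat Y_{elementary}$ (unbiased by Theorem~\ref{thm:estimator}, and a function of $k$ by the previous step) solve this system, whence $\hat Y_{convolution} = \hat Y_{elementary}$.

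For $\hat Y_{RBmerge}$ I would use the Kronecker description stated just before the theorem: the relevant top row of $K^{-1}$ is $\prod_i(1-2q)^{-1}$ times the top row of $\bigotimes_i \begin{bmatrix} 1-q & -q \\ -q & 1-q \end{bmatrix}$, and because an entry of a Kronecker product factorizes over its tensor factors, the entry selected by the observed sequence equals $\prod_i \frac{1-q-\M_q(x_i)}{1-2q}$; subtracting from $1$ yields exactly $\hat Y_{elementary}$ (and, in passing, exhibits $\hat Y_{RBmerge}$ as a function of $k$, so this equality would equally follow from the uniqueness lemma once unbiasedness is granted).

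The step I expect to be the main obstacle is making the $\hat Y_{RBmerge}$ identification rigorous: unlike the other two estimators, whose dependence on $k$ alone is transparent, pinning $\hat Y_{RBmerge}$ to the explicit top-row entry requires invoking the internal structure of the randomized merge $g$ from \citet{hehir2023sketch}---specifically that the Rao--Blackwellized $n$-way merge is governed by the $2^n\times 2^n$ Kronecker matrix of its Theorem~A.2. Everything else is elementary bookkeeping, the only other point worth checking being the invertibility of $P$, which is immediate from $q \neq 1/2$ (each $2\times 2$ Kronecker factor is invertible).
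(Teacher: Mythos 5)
Your proposal is correct, and it reaches the conclusion by a genuinely different route for the key uniqueness step. The paper also reduces all three estimators to functions of $k = \sum_i \M_q(x_i)$, but it establishes uniqueness statistically: it embeds the fixed bits into a model where the $X_i$ are i.i.d.\ $\bernoulli(\theta)$, notes that $\sum_i \tilde X_i$ is a complete sufficient statistic for the induced binomial family, checks via the law of total expectation that each estimator is unbiased for $1-(1-\theta)^n$, and invokes the Lehmann--Scheff\'e theorem to conclude they coincide. Your uniqueness lemma replaces completeness and Lehmann--Scheff\'e with finite-dimensional linear algebra: per-configuration unbiasedness collapses to the $n+1$ equations $P^\top f = g$, and invertibility of $P$ pins $f$ down. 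This is more elementary and self-contained (no auxiliary random model, no appeal to textbook completeness results), and arguably closer to the paper's stated ``elementary'' aims; what the paper's route buys in exchange is that it never needs to argue about $P$ beyond what the definition of $\hat Y_{convolution}$ already presupposes, and it yields the minimum-variance unbiased property as a byproduct. Both proofs share the same weak point, which you correctly flag: identifying $\hat Y_{RBmerge}$ with one minus the appropriate top-row entry of $K^{-1}$ (equivalently, showing it is a function of $k$) rests on inspection of the merge construction in the cited work, and the paper likewise leaves this at the level of ``can be seen through the inspection process.'' One small caveat: your parenthetical justification of the invertibility of $P$ conflates $P$ with the Kronecker matrix $K$. $P$ is the $(n+1)\times(n+1)$ transition matrix on sums, not itself a Kronecker product, so ``each $2\times 2$ factor is invertible'' does not literally apply; you should instead note that $P$ is the matrix of the invertible map $K$ restricted to the permutation-symmetric subspace (which $K$ leaves invariant), or simply observe that the existence of $P^{-1}$ is presupposed by the definition of $\hat Y_{convolution}$ in the theorem statement.
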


\begin{proof}
    We avoid deriving the convolution in Eq.~\ref{eq:convolution} \citep[see][Appendix~A]{alaggan2017non} and turn instead to statistical estimation theory to prove these estimators are equivalent.
    
    Although the theorem concerns arbitrary $x_1, \dots x_n \in \{0, 1\}$, suppose for a moment that $X_1, \dots, X_n$ are random variables, independent and identically distributed as $\bernoulli(\theta)$. Let $\tilde X_i = \M_q(X_i)$. Then $\tilde X_i$ are also i.i.d. Bernoulli with parameter:
    \[
    \mu = E[ \tilde X_i ] = E[ E[ \M_q(X_i) \mid X_i ] ] = E[ q + (1 - 2q) X_i] = q + (1 - 2q) \theta .
    \]
    It is well known that $\sum_i \tilde X_i$ is a complete, sufficient statistic for $\mu$ \citep[e.g.,][Example~6.2.22]{casella2002statistical}. Consequently, it is also complete and sufficient for $\theta$. As a result, the Lehmann--Scheff{\'e} theorem states that any function $f(\sum \tilde X_i)$ that is an unbiased estimator for a function $\tau(\theta)$ is the \textit{unique} minimum-variance unbiased estimator for $\tau(\theta)$. Clearly, $\hat Y_{convolution}$ is a function of $\sum_i \tilde X_i$. It can be seen that $\hat Y_{RBmerge}$ is as well, through the inspection process described in the preceding paragraphs. In fact, in the equal-$q$ case, we have:
    \[
    \hat Y_{RBmerge} = \hat Y_{elementary} = 1 - \frac{q^{\sum_i \tilde X_i} (1 - q)^{n - \sum_i \tilde X_i}}{(1 - 2q)^n} .
    \]
    So indeed, all three estimators are a function of $\sum_i \tilde X_i$. We will now show that all these estimators are unbiased estimators for the same quantity, $1 - (1 - \theta)^n$, and therefore are identical estimators.
    
    Recall that for fixed $x_1, \dots, x_n$, each of the estimators in question is unbiased for $x_1 \lor \dots \lor x_n$. In other words for each estimator $\hat Y \in \{ \hat Y_{elementary}, \hat Y_{RBmerge}, \hat Y_{convolution} \}$:
    \begin{align*}
    E[\hat Y] &= E [ E[ \hat Y \mid X_1, \dots, X_n ] ] \\
        &= E[ X_1 \lor \dots \lor X_n ] \\
        &= 1 - P(X_1 = 0, \dots, X_n = 0) \\
        &= 1 - \prod_{i = 1}^n P(X_i = 0) \\
        &= 1 - (1 - \theta)^n .
    \end{align*}
\end{proof}

Given the convenient closed-form of $\hat Y_{elementary}$ and the equivalence of the three estimators, the variance of these estimators may now readily be found, also using elementary techniques.

\begin{theorem}
    \label{thm:variance}
    Suppose $x_1, \dots x_n \in \{0, 1\}$, $q_1, \dots, q_n \in [0, 1/2)$. Then
    \[
    \Var(\hat Y_{elementary}) = \prod_i \left( 1 - x_i + \frac{q_i (1 - q_i)}{(1 - 2 q_i)^2} \right) - 1_{\sum x_i = 0} .
    \]
\end{theorem}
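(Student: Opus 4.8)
The plan is to exploit the product form of $\hat Y_{elementary}$ directly. Define the per-bit factors $Z_i = \frac{1 - q_i - \M_{q_i}(x_i)}{1 - 2 q_i}$, so that $\hat Y_{elementary} = 1 - \prod_{i=1}^n Z_i$. Since adding a constant does not change variance, $\Var(\hat Y_{elementary}) = \Var\left(\prod_i Z_i\right)$. The $Z_i$ are independent because the noisy bits $\M_{q_i}(x_i)$ are independent, so both $\prod_i Z_i^2$ and $\prod_i Z_i$ factor through the expectation, giving $\Var\left(\prod_i Z_i\right) = \prod_i E[Z_i^2] - \left(\prod_i E[Z_i]\right)^2$. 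The whole proof then reduces to computing the first two moments of each $Z_i$.

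For the first moments, the computation preceding Theorem~\ref{thm:estimator} already shows $E[Z_i] = 1 - x_i$. For the second moments, I would compute $\Var(Z_i)$ rather than $E[Z_i^2]$ directly: since $\M_{q_i}(x_i)$ is a Bernoulli variable whose success probability is either $q_i$ (when $x_i=0$) or $1-q_i$ (when $x_i=1$), its variance is $q_i(1-q_i)$ in either case, so $\Var(Z_i) = \frac{q_i(1-q_i)}{(1-2q_i)^2}$. Combining via $E[Z_i^2] = \Var(Z_i) + (E[Z_i])^2$ and using that $x_i \in \{0,1\}$ forces $(1-x_i)^2 = 1 - x_i$, I obtain $E[Z_i^2] = 1 - x_i + \frac{q_i(1-q_i)}{(1-2q_i)^2}$, which is exactly the factor appearing in the claimed product.

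It remains to identify the subtracted term. Here $\left(\prod_i E[Z_i]\right)^2 = \prod_i (1-x_i)^2 = \prod_i (1-x_i)$, again using $(1-x_i)^2 = 1-x_i$, and this product equals $1$ precisely when every $x_i = 0$ and $0$ otherwise, i.e.\ $\prod_i (1-x_i) = 1_{\sum_i x_i = 0}$. Substituting the two pieces yields the stated formula. There is no real obstacle in this argument: everything is elementary once the closed form of $\hat Y_{elementary}$ is in hand. The only points requiring a moment's care are the factorization of the variance of a product of independent random variables (which needs independence of the $Z_i^2$, inherited from independence of the $Z_i$) and the repeated use of the idempotence $(1-x_i)^2 = 1-x_i$ for binary $x_i$, which is what collapses the squared mean into a clean indicator.
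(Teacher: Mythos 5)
Your proof is correct and follows essentially the same route as the paper: both reduce $\Var(\hat Y_{elementary})$ to the variance of a product of independent per-bit terms, compute first and second moments of each term, and use the idempotence $(1-x_i)^2 = 1-x_i$ to collapse the squared mean into the indicator $1_{\sum_i x_i = 0}$. The only cosmetic difference is that you fold the normalization $1/(1-2q_i)$ into the definition of $Z_i$, whereas the paper works with unnormalized $Z_i = 1 - q_i - \M_{q_i}(x_i)$ and divides by $\prod_i (1-2q_i)^2$ at the end.
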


\begin{proof}
    To begin, let $Z_i = 1 - q_i - \M_{q_i}(x_i)$, and note that:
    \begin{align*}
        \Var[Z_i] &= \Var[\M_{q_i}(x_i)] \\
            &= q_i(1-q_i) \\
        E[Z_i] &= 1 - q_i - E[\M_{q_i}(x_i)] \\
            &=\begin{cases}
                0, & x_i = 1 \\
                1 - 2q_i, & x_i = 0
            \end{cases} \\
            &= (1 - x_i) (1 - 2 q_i) \\
        E[Z_i^2] &= \Var(Z_i) + (E[Z_i])^2 \\
            &= q_i (1 - q_i) + (1 - x_i) (1 - 2 q_i)^2 .
    \end{align*}
    Then the variance of $\hat Y_{elementary}$ is given by:
    \begin{align*}
    \Var(\hat Y_{elementary}) &= \Var \left( \prod_{i=1}^n\frac{1 - q_i - \M_{q_i}(x_i)}{1-2q_i} \right) \\
        &= \frac{\Var \left( \prod_i Z_i \right)}{\prod_i (1-2q_i)^2} \\
        &= \frac{\prod_i E \left[ Z_i^2 \right] - \prod_i (E[Z_i])^2}{\prod_i (1-2q_i)^2} \\
        &= \frac{\prod_i \left( q_i (1 - q_i) + (1 - x_i)(1 - 2q_i)^2 \right) - \prod_i (1-x_i)(1-2q_i)^2}{\prod_i (1-2q_i)^2} \\
        &= \prod_i \left( \frac{q_i (1 - q_i)}{(1 - 2 q_i)^2} + (1 - x_i) \right) - \prod_i (1-x_i) \\
        &= \prod_i \left( 1 - x_i + \frac{q_i (1 - q_i)}{(1 - 2 q_i)^2} \right) - 1_{\sum x_i = 0} .
    \end{align*}
\end{proof}

\printbibliography

\end{document}